\DeclareMathAlphabet{\mathcalligra}{T1}{calligra}{m}{n}
\DeclareMathAlphabet{\mathpzc}{OT1}{pzc}{m}{it}
\newtheorem{theoremABC}{Theorem}
\theoremstyle{definition}
\theoremstyle{remark}
\newcommand{\N}{{\mathbb{N}}}
\newcommand{\R}{{\mathbb{R}}}
\newcommand{\Ll}{{\mathcal{L}}}   % Lagrangian planes
\newcommand{\Ss}{{\mathcal{S}}}
\newcommand{\Ho}{{\rm H}}              % Homology
\newcommand{\lambdacan}{\lambda_{\rm can}} % T^*M: Liouville form
\newcommand{\Arnold}{{Arnol$'$d}}           % V.I. Arnol'd
\newcommand{\eps}{{\varepsilon}}
\def\NABLA#1{{\mathop{\nabla\kern-.5ex\lower1ex\hbox{$#1$}}}}
\def\Nabla#1{\nabla\kern-.5ex{}_{#1}}
\def\Tabla#1{\Tilde\nabla\kern-.5ex{}_{#1}}
\def\abs#1{\mathopen|#1\mathclose|}
\renewcommand{\Tilde}{\widetilde}
\newlength\eqshift
\renewcommand\theequation{\thesection.\arabic{equation}}
\let\savetheequation\theequation
\renewcommand*\env@matrix[1][\arraystretch]{%
  \edef\arraystretch{#1}%
  \hskip -\arraycolsep
  \let\@ifnextchar\new@ifnextchar
  \array{*\c@MaxMatrixCols c}}
\let\save@mathaccent\mathaccent
\newcommand*\if@single[3]{%
  \setbox0\hbox{${\mathaccent"0362{#1}}^H$}%
  \setbox2\hbox{${\mathaccent"0362{\kern0pt#1}}^H$}%
  \ifdim\ht0=\ht2 #3\else #2\fi
  }
\newcommand*\rel@kern[1]{\kern#1\dimexpr\macc@kerna}
\newcommand*\widebar[1]{\@ifnextchar^{{\wide@bar{#1}{0}}}{\wide@bar{#1}{1}}}
\newcommand*\wide@bar[2]{\if@single{#1}{\wide@bar@{#1}{#2}{1}}{\wide@bar@{#1}{#2}{2}}}
\newcommand*\wide@bar@[3]{%
  \begingroup
  \def\mathaccent##1##2{%
%Enable nesting of accents:
    \let\mathaccent\save@mathaccent
%If there's more than a single symbol, use the first character instead (see below):
    \if#32 \let\macc@nucleus\first@char \fi
%Determine the italic correction:
    \setbox\z@\hbox{$\macc@style{\macc@nucleus}_{}$}%
    \setbox\tw@\hbox{$\macc@style{\macc@nucleus}{}_{}$}%
    \dimen@\wd\tw@
    \advance\dimen@-\wd\z@
%Now \dimen@ is the italic correction of the symbol.
    \divide\dimen@ 3
    \@tempdima\wd\tw@
    \advance\@tempdima-\scriptspace
%Now \@tempdima is the width of the symbol.
    \divide\@tempdima 10
    \advance\dimen@-\@tempdima
%Now \dimen@ = (italic correction / 3) - (Breite / 10)
    \ifdim\dimen@>\z@ \dimen@0pt\fi
%The bar will be shortened in the case \dimen@<0 !
    \rel@kern{0.6}\kern-\dimen@
    \if#31
      \overline{\rel@kern{-0.6}\kern\dimen@\macc@nucleus\rel@kern{0.4}\kern\dimen@}%
      \advance\dimen@0.4\dimexpr\macc@kerna
%Place the combined final kern (-\dimen@) if it is >0 or if a superscript follows:
      \let\final@kern#2%
      \ifdim\dimen@<\z@ \let\final@kern1\fi
      \if\final@kern1 \kern-\dimen@\fi
    \else
      \overline{\rel@kern{-0.6}\kern\dimen@#1}%
    \fi
  }%
  \macc@depth\@ne
  \let\math@bgroup\@empty \let\math@egroup\macc@set@skewchar
  \mathsurround\z@ \frozen@everymath{\mathgroup\macc@group\relax}%
  \macc@set@skewchar\relax
  \let\mathaccentV\macc@nested@a
%The following initialises \macc@kerna and calls \mathaccent:
  \if#31
    \macc@nested@a\relax111{#1}%
  \else
%If the argument consists of more than one symbol, and if the first token is
%a letter, use that letter for the computations:
    \def\gobble@till@marker##1\endmarker{}%
    \futurelet\first@char\gobble@till@marker#1\endmarker
    \ifcat\noexpand\first@char A\else
      \def\first@char{}%
    \fi
    \macc@nested@a\relax111{\first@char}%
  \fi
  \endgroup
}
\long\def\symbolfootnote[#1]#2{\begingroup%
\def\thefootnote{\fnsymbol{footnote}}\footnote[#1]{#2}\endgroup}
\newcommand{\LW}{L_{\rm W}}            % Weber's Lagrangian
\newcommand{\HW}{H_{\rm W}}           % Weber's Hamiltonian
\newcommand{\Tflat}{T_{\rm flat}}
\begin{document}
\sloppy
%\author{Urs Frauenfelder \qquad Joa Weber
%                 Instituto de Matem\'{a}tica, Estat\'{\i}stica
%                 e Computa\c{c}\~{a}o Scient\'{\i}fica \\
%                 Universidade Estadual de Campinas \\
%                 Rua S\'{e}rgio Buarque de Holanda~651,
%                 Cidade Universit\'{a}ria "Zeferino Vaz",
%                CEP~13083-859, Campinas-SP, Brasil.
           %     \\
%                 joa@math.sunysb.edu.
%              \\
%              Tel.: +55-19-3521xxxx\\
%              Fax: +55-19-3521xxxx\\
%            }

\author{\quad Urs Frauenfelder \quad \qquad\qquad
             Joa Weber%\footnote{
%        {\bf Financial support:}
%        Funda\c{c}\~{a}o de Amparo
%        \`{a} Pesquisa do Estado de S\~{a}o Paulo
%        (FAPESP), processo $\mathrm{n}^{\rm o}$ 2017/19725-6,
        %and CNPq, Conselho Nacional de Desenvolvimento Cient\'{\i}fico
        %e Tecnol\'ogico - Brasil.
        %Bolsista do CNPq - Brasil.
        %%%\hfill
        %  se publicado individualmente:
        % "O presente trabalho foi realizado com apoio do CNPq, Conselho Nacional de Desenvol          vimento Científico  e Tecnológico - Brasil".
        %  se publicado em co-autoria:
        % "Bolsista do CNPq - Brasil".
        %
        %%%\newline
%        {\bf Address:}
%        Instituto de Matem\'{a}tica, Estat\'{\i}stica
%        e Computa\c{c}\~{a}o Scient\'{\i}fica,
%        Universidade Estadual de Campinas,
%        Rua S\'{e}rgio Buarque de Holanda~651,
        %SP~13083-859 ,
%        Campinas, SP, Brasil.
        % MSC 37Dxx 58E05
        %
%        \hfill
%        joa@ime.unicamp.br
%  }
    \\
    Universit\"at Augsburg \qquad\qquad%\quad
    UNICAMP
}

\title{The fine structure of Weber's hydrogen atom -- \\
         Bohr-Sommerfeld approach}

%\subtitle{-- Monograph --}  %%% Springer book style only
\date{\today}

%\begin{titlepage}
\maketitle %(to set the title page and copyright page; see note)
                 %\include files (e.g., preface, introduction)
%\thispagestyle{empty}
%\newpage
%
%{\color{red}
%  \subsection*{To do}
%  \begin{itemize}
%  \item
%    ...
%  \end{itemize}
%}

%\end{titlepage}

%%%%%%%%%%%%%%%%%%%%%%%%%%%%%%%%%%
%%%%%%%%%% FRONTMATTER %%%%%%%%%%%%%
%%%%%%%%%%%%%%%%%%%%%%%%%%%%%%%%%%
%\frontmatter
%\include{dedic}
%\include{foreword}
%\include{preface}
%\include{acknow}
%
%\tableofcontents
%
%\include{acronym}

%\frontmatter %• title page and copyright page information
%\include{0-dedic}
% dedication by hand:
%     \clearpage\thispagestyle{empty}
%      \par\vspace*{.35\textheight}{\centering Dedicated to ...\par}\clearpage
%
%\include{0-preface}

% ACKNOWLEDGEMENTS %
% * various anonymous referees?
% * DISCLAIMER: - references based on authors knowledge
%                        - more people contributed etc etc

%%%%%%%%%%%%%%%%%%%%%%%%%%%%%%%%%%
%%%%%%% main matter %%%%%%%%%%%%%%%%%%
%%%%%%%%%%%%%%%%%%%%%%%%%%%%%%%%%%
%\mainmatter
%\include{part}
%\include{chapter}
%\include{appendix}

%\mainmatter %\include files (e.g., main chapters, appendices)

% introduction
%\cleardoublepage
%\phantomsection
%\include{1_sc-smoothness}      % INTRODUCTION

%%%%%%%%%%%%%%%%%%%%%%%%%%%%%%%%%%%
%%%%%%% Abstract %%%%%%%%%%%%%%%%%%%%%
%%%%%%%%%%%%%%%%%%%%%%%%%%%%%%%%%%%
\begin{abstract}
In this paper we determine in second order in the fine structure
constant the energy levels of Weber's Hamiltonian admitting a
quantized torus. Our formula coincides with the formula obtained
by Wesley using the Schr\"odinger equation for Weber's Hamiltonian.
\end{abstract}

\tableofcontents

%\newpage
%%%%%%%%%%%%%%%%%%%%%%%%%%%%%%%%%%%
%%%%%%% Chapter:  %%%%%%%%%%%%%%%
%%%%%%%%%%%%%%%%%%%%%%%%%%%%%%%%%%%
\section{Introduction}

Although Weber's electrodynamics was highly praised
by Maxwell~\cite[p.\,XI]{Maxwell:1873a}, see
also~\cite[Preface]{Koch-Torres-Assis:1994a},
it was superseded by Maxwell's theory and basically forgotten.
Different from Maxwell's theory, which is a field theory,
Weber's theory is a theory of action-at-a-distance, like Newton's theory.
Weber's force law can be used to explain Amp\`{e}re's law and Faraday's
induction law; see~\cite[Ch.\,4 Ch.\,5]{Koch-Torres-Assis:1994a}.
Different from the Maxwell-Lorentz theory Amp\`{e}re's original law
also predicted transverse Amp\`{e}re forces.
Interesting experiments about the question of existence of 
transverse Amp\`{e}re forces where carried out by
Graneau and Graneau~\cite{Graneau:1996a}.

By quantizing the Coulomb potential Bohr and Sommerfeld,
cf.~\cite[Ch.\,II]{Mehra:1982a},
obtained for the hydrogen atom the following energy levels
\begin{equation}\label{eq:E-levels-Coulomb-Kepler}
     -\frac{1}{2n^2},\quad n\in\N,
\end{equation}
in atomic units.\footnote{
  Differences of these energy levels give rise
  to the \textbf{Rydberg formula}
  $$
     \frac{1}{2n^2}-\frac{1}{2m^2},\quad \text{$n,m\in\N$, $m>n$,}
  $$
  that corresponds to the energy of the emitted photon
  when an electron falls from an excited energy level to a lower one.
  Historically these differences were first measured in spectroscopy.
  In 1885 the swiss school teacher Balmer discovered
  the formula for the $n=2$ series, nowadays referred to as
  the \textbf{Balmer series}; cf.~\cite[p.\,163]{Mehra:1982a}.
}
Later Schr\"odinger interpreted these numbers as eigenvalues of his equation,
cf.\cite{Mehra:1987b}.
By taking account for velocity dependent mass, as suggested by
Einstein's theory of relativity, Sommerfeld obtained a more refined
formula which also takes account of the angular momentum quantum
number $\ell$. Sommerfeld's formula~\cite[\S 2 Eq. (24)]{Sommerfeld:1916a}
is given in second order in the fine structure constant $\alpha$ by
\begin{equation}\label{eq:Sommerfeld}
     -\frac{1}{2n^2}
     -\frac{\alpha^2}{2n^3\ell}
     +{\color{black}\frac{3}{8}}
     \frac{\alpha^2}{n^4},\qquad n\in\N,\quad \ell=1,\dots,n.
\end{equation}
This formula, referred to as the \textbf{fine structure of the
  hydrogen atom}, coincides with the formula one derives
in second order in $\alpha$ from Dirac's
equation as computed by Gordon~\cite{Gordon:1928a}
and~\cite{Darwin:1928a}.
We refer to Schweber~\cite[\S 1.6]{Schweber:1994a}
for the historical context of how Dirac discovered his equation.

The mathematical reason why the 
formula~(\ref{eq:E-levels-Coulomb-Kepler}) of energy levels
for the Coulomb potential of the Kepler problem is degenerate in the sense that it is
independent of the angular momentum quantum number
lies in the fact that the Coulomb problem is super-integrable.
Namely, it is not just rotation invariant, but as well admits
further integrals given by the Runge-Lenz vector.
Dynamically this translates to the fact that for negative energy,
except for collision orbits, all orbits are periodic.
Indeed they are given by Kepler ellipses.
While for velocity dependent mass rotation symmetry is preserved,
the additional symmetry from the Runge Lenz vector is broken.
Dynamically one sees that orbits are in general not any more
periodic, but given by rosettes as illustrated by
Figure~\ref{fig:fig-rosette}.
The situation for Weber's Hamiltonian is quite analogous.
In fact, Bush~\cite{Bush:1926a} pointed out that
the shapes of the orbits in both theories coincide.
However, note even when the shapes of the orbits are the same this does
not mean that their parametrization or their energy coincides.

In this paper we compute the energy levels of the Weber rosettes
using Sommerfeld's method.

\begin{theoremABC}
For Weber's Hamiltonian the fine structure formula for the hydrogen atom
in second order in the fine structure constant $\alpha$ becomes
$$
     -\frac{1}{2n^2}-\frac{\alpha^2}{2n^3\ell}+\frac{1}{2}\frac{\alpha^2}{n^4}.
$$
\end{theoremABC}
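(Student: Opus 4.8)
The plan is to realize the statement as a computation in the old quantum theory: quantize the Liouville tori of Weber's Hamiltonian and read off the energy. Working in the plane of motion with polar coordinates $(r,\phi)$, Weber's Hamiltonian is integrable because the angular momentum $p_\phi=\ell$ and the energy $H=E$ are conserved; for $E<0$ the bounded motion is a rosette (Figure~\ref{fig:fig-rosette}) filling a two-dimensional invariant torus. On such a torus the two Bohr--Sommerfeld conditions read $\oint p_\phi\,d\phi=2\pi\ell$ and $\oint p_r\,dr=2\pi n_r$ with integers $\ell\ge1$ and $n_r\ge0$, and I will eliminate $E$ between them, setting $n:=n_r+\ell$, to obtain the displayed fine-structure formula.

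First I would solve the conserved-energy relation $H=E$ for the radial momentum. In atomic units, writing $\alpha=1/c$, the velocity-dependence of Weber's law enters through $p_r=(1+\alpha^2/r)\dot r$, so that
\[
 p_r^2=2\Bigl(1+\tfrac{\alpha^2}{r}\Bigr)\,g(r),\qquad g(r):=E+\tfrac1r-\tfrac{\ell^2}{2r^2}.
\]
The decisive structural observation is that the Weber correction sits entirely in the positive, smooth factor $1+\alpha^2/r$, so the turning points of the radial libration are the two roots $r_\pm$ of $g$, independent of $\alpha$; by Vieta they satisfy $r_-+r_+=1/(-E)$ and $r_-r_+=\ell^2/\bigl(2(-E)\bigr)$, and there is no spurious singular turning point. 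Expanding $\sqrt{1+\alpha^2/r}=1+\tfrac{\alpha^2}{2r}+O(\alpha^4)$ turns the radial action into
\[
 \oint p_r\,dr=2\sqrt2\int_{r_-}^{r_+}\!\sqrt{g}\,dr+\sqrt2\,\alpha^2\int_{r_-}^{r_+}\!\frac{\sqrt{g}}{r}\,dr+O(\alpha^4),
\]
that is, the classical Coulomb integral plus a single $\alpha^2$-correction.

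The heart of the matter is the exact evaluation of these two integrals, which I would carry out by Sommerfeld's residue method (contour integration around the cut $[r_-,r_+]$, collecting the residues at $r=0$ and $r=\infty$), or equivalently from the standard formulas $\int_{r_-}^{r_+}\frac{\sqrt{(r-r_-)(r_+-r)}}{r}\,dr=\tfrac{\pi}{2}(r_-+r_+)-\pi\sqrt{r_-r_+}$ and $\int_{r_-}^{r_+}\frac{\sqrt{(r-r_-)(r_+-r)}}{r^2}\,dr=\pi\bigl(\tfrac{r_-+r_+}{2\sqrt{r_-r_+}}-1\bigr)$, after writing $g(r)=(-E)(r-r_-)(r_+-r)/r^2$. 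Substituting the Vieta relations expresses the radial action purely through $E$ and $\ell$: the leading term reproduces $\sqrt2\,\pi/\sqrt{-E}-2\pi\ell$, and the correction contributes $\alpha^2\pi\bigl(\ell^{-1}-\sqrt2\,\sqrt{-E}\bigr)$.

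Finally I would impose the quantization conditions. Dividing by $\pi$ they collapse to
\[
 2n=\frac{\sqrt2}{\sqrt{-E}}+\frac{\alpha^2}{\ell}-\sqrt2\,\alpha^2\sqrt{-E},\qquad n=n_r+\ell,
\]
which I solve perturbatively by setting $\sqrt{-E}=\kappa_0+\alpha^2\kappa_1$. At zeroth order $\kappa_0=1/(\sqrt2\,n)$, giving $-E=1/(2n^2)$; the first order fixes $\kappa_1$, and expanding $E=-(\kappa_0+\alpha^2\kappa_1)^2$ to order $\alpha^2$ yields exactly $-\tfrac{1}{2n^2}-\tfrac{\alpha^2}{2n^3\ell}+\tfrac12\tfrac{\alpha^2}{n^4}$. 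I expect the main obstacle to lie in the bookkeeping of the middle steps: obtaining the $\alpha^2$-correction integral in closed form and tracking the factors of $\sqrt2$ and the signs, since a single sign slip in the Weber factor $1+\alpha^2/r$ already flips both relativistic terms. A secondary point to verify is the legitimacy of the naive integer quantization (absence of Maslov shifts beyond those absorbed into $n_r$ and $\ell$) and that $1+\alpha^2/r>0$ keeps the libration interval and its action integral well defined.
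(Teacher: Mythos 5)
Your proposal is correct and lands exactly on the paper's formula; it follows the same overall strategy (Bohr--Sommerfeld quantization of the invariant tori: $p_\phi=\ell$, $\tfrac{1}{2\pi}\oint p_r\,dr=n_r$, eliminate $E$ and set $n=n_r+\ell$), but it evaluates the radial action by a genuinely different decomposition. The paper multiplies out $p_r^2=2(1+\alpha^2/r)\bigl(\HW+\tfrac1r-\tfrac{\ell^2}{2r^2}\bigr)$ into a radicand $A+\tfrac{2B}{r}+\tfrac{C}{r^2}+\tfrac{D_1}{r^3}$ whose four coefficients all carry $\alpha^2$-dependence, and then quotes Sommerfeld's residue formula $n_r=-i\bigl(\sqrt{C}-\tfrac{B}{\sqrt A}-\tfrac{BD_1}{2C\sqrt C}\bigr)$, which requires his branch conventions ($\sqrt C$ negative imaginary, $\sqrt A$ positive imaginary) and is itself only a first-order-in-$D_1$ statement; several nested Taylor expansions in $\alpha^2$ then follow. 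You instead keep the Weber correction isolated in the single positive factor $\sqrt{1+\alpha^2/r}$, observe that the turning points are therefore the unperturbed Kepler roots $r_\pm$ of $g$, expand that factor once, and reduce everything to the two real closed-form integrals $\int_{r_-}^{r_+}\sqrt{(r-r_-)(r_+-r)}\,r^{-k}dr$, $k=1,2$, plus Vieta. This is more elementary and self-contained (no complex branch bookkeeping, no external formula to trust), and it makes the uniformity of the $O(\alpha^4)$ error transparent; the paper's route has the advantage of slotting directly into Sommerfeld's published machinery and of paralleling his relativistic computation line by line. I checked your intermediate relation $2n=\tfrac{\sqrt2}{\sqrt{-E}}+\tfrac{\alpha^2}{\ell}-\sqrt2\,\alpha^2\sqrt{-E}$: it coincides exactly with the paper's equation $n-\tfrac{\alpha^2}{2\ell}=\tfrac{1+\alpha^2\HW}{\sqrt{-2\HW}}$ after expansion, and your perturbative solution in $\sqrt{-E}=\kappa_0+\alpha^2\kappa_1$ reproduces $-\tfrac{1}{2n^2}-\tfrac{\alpha^2}{2n^3\ell}+\tfrac{\alpha^2}{2n^4}$, so the two computations agree at every stage.
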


\begin{proof}
Equation~(\ref{eq:main}).
\end{proof}

This formula coincides with the formula
Wesley~\cite[Eq.\,(100)]{Wesley:1989a}
obtained using Schr\"odinger's equation.
\newline
Note that the difference to Sommerfeld's formula~(\ref{eq:Sommerfeld})
involves $\alpha^2\approx 10^{-5}$ and
only lies in the term just involving the main quantum number and not
the angular one. Because this term is of much lower order than the
Balmer term, i.e. the first term in~(\ref{eq:Sommerfeld}), it seems
difficult to actually measure the difference.

%.\newpage
%%%%%%%%%%%%%%%%%%%%%%%%%%%%%%%%%%%
%%%%%%% Chapter:  %%%%%%%%%%%%%%%
%%%%%%%%%%%%%%%%%%%%%%%%%%%%%%%%%%%
\subsubsection*{Outlook -- Symplectic topology and non-local Floer
  homology}
Weber's Hamiltonian is related to delayed potentials
as pointed out by Carl Neumann in 1868.
We explain this relation in Appendix~\ref{sec:delayed-potentials}.
The question how to extend Floer theory
to delayed potentials is a topic of active research,
see~\cite{Albers:2018c,Albers:2018a,Albers:2018b}.
In particular, for delayed potentials Floer's equation is not
local and therefore new analytic tools have to be developed
inspired by the recent theory of polyfolds due to
Hofer, Wysocki, and Zehnder~\cite{Hofer:2017a}.
While to our knowledge Weber's Hamiltonian
was so far unknown to the symplectic community,
we hope to open up with this article a new branch of research
in symplectic topology.

To our knowledge so far nobody incorporated the spin of the
electron in Weber's electrodynamics. In fact, it is even an amazing
coincidence that Sommerfeld obtained the same formula
for the fine structure of hydrogen as predicted by Dirac's theory.
It is explained by Keppeler~\cite{Keppeler:2003a,Keppeler:2003b}
that using semiclassical techniques this can be explained,
because the Maslov index and the influence of spin mutually
cancelled out each other which both were not taken into account by
Sommerfeld; neither in our paper.
We expect that the proper incorporation of spin
into Weber's electrodynamics requires techniques
from non-local Floer homology currently under development.

Lamb and Retherford~\cite{Lamb:1947a} discovered 1947 that the
spectrum of hydrogen shows an additional small shift not predicted by
Dirac's theory. This shift nowadays referred to as the \textbf{Lamb shift}
was a major topic in the Shelter Island Conference and the driving
force for the development of Quantum Field Theory and Renormalization
Theory; see~\cite[\S 4 \S 5]{Schweber:1994a}.
An intuitive explanation of the Lamb shift is that vacuum fluctuations
cause a small correction of the potential energy close to the nucleus
and this small correction then leads to a shift in the spectrum of the
hydrogen atom.

In Appendix~\ref{sec:delayed-potentials}
we explain following Neumann how a retarded Coulomb potential
when Taylor approximated up to second order in the fine structure constant
leads to Weber's Hamiltonian.
Higher order perturbations lead to perturbations of Weber's
Hamiltonian which are most strongly felt close to the nucleus.
Whether there is a relation between these higher order perturbations
and vacuum fluctuations is an important topic in the non-local
Floer homology under development and its interaction
with the semiclassical approach.

\vspace{.2cm}
\textit{Acknowledgements.}
The authors would like to thank sincerely to Andr\'{e}
Koch Torres Assis for many useful conversations about
Weber's electrodynamics.
The paper profited a lot from discussions with Peter Albers
and Felix Schlenk about delay equations whom we would like
to thank sincerely as well.
This article was written during the stay of the first author
at the Universidade Estadual de Campinas (UNICAMP) whom he
would like to thank for hospitality.
His visit was supported by UNICAMP and Funda\c{c}\~{a}o de Amparo
\`{a} Pesquisa do Estado de S\~{a}o Paulo (FAPESP),
processo $\mathrm{n}^{\rm o}$ 2017/19725-6.

%%%%%%%%%%%%%%%%%%%%%%%%%%%%%%%%%%%
%%%%%%% Chapter:  %%%%%%%%%%%%%%%
%%%%%%%%%%%%%%%%%%%%%%%%%%%%%%%%%%%
\section{Weber's Hamiltonian}

In this article we use atomic units to describe a model for the
hydrogen atom. There are four \textbf{atomic units} which are unity:
The electron mass $m_e=1$, the elementary charge $e=1$,
the reduced Planck constant $\hbar=h/2\pi=1$, and the Coulomb
force constant $k_0=1/(4\pi\eps_0)=1$.
In particular, Coulomb's Lagrangian and Hamiltonian
for an electron ($-e=-1$) attracted by a proton are given by
$$
     L=\frac12\abs{v}^2+\frac{1}{\abs{q}},\qquad
     H=\frac12\abs{p}^2-\frac{1}{\abs{q}},
$$
respectively. The speed of light is given in atomic units by
$c=\frac{1}{\alpha}\approx 137$ where $\alpha$ is
Sommerfeld's \textbf{fine structure constant}.
In his work Weber used a different constant, namely
$$
     c_{\rm W}=\sqrt{2} c.
$$
In his famous experiment in 1856 he measured
this constant together with Kohlrausch.
This experiment was later crucial for Maxwell, because it indicated a
strong relationship between electrodynamics and light~\cite{Weber:1857a}.

The hydrogen atom consists of a (heavy) proton and a (light) electron.
We just consider the planar case and suppose that the proton sits at the
origin of $\R^2$.
Polar coordinates $(r,\phi)\in\R^2_\times:=\R^2\setminus\{0\}$
provide the coordinates $(r,\phi,v_r,v_\phi)\in T\R^2_\times$
and $(r,\phi,p_r,p_\phi)\in T^*\R^2_\times$.
Consider the Lagrangian function
\begin{equation}\label{eq:L_W}
     \LW(r,\phi,v_r,v_\phi)
     =\underbrace{\frac{1}{2}( v_r^2+r^2 v_\phi^2)}_{=:\Tflat}
     +\underbrace{\frac{1}{r}\bigg(1+\frac{
         v_r^2}{2c^2}\bigg)}_{=:-S} %=-V_{\rm C}-V_{\rm W}}.
\end{equation}
Here $\Tflat$ is just the kinetic energy in polar coordinates,
while $S$ is referred in~\cite{Koch-Torres-Assis:1994a}
as the 'Lagrangian energy'. 
Note that $\LW$ differs from the Coulomb Lagrangian $L$
by the additional term $v_r^2/2c^2r$.
This Lagrangian was introduced by Carl Neumann~\cite{Neumann:1868a}
in 1868. 
We refer to $S$ as \textbf{Neumann's potential function}.
Its Euler Lagrange equation are precisely the equations
studied by Wilhelm Weber~\cite{Weber:1846a} twenty years earlier in 1846.
In Appendix~\ref{sec:delayed-potentials}
we explain how Neumann's potential function can be obtained
as Taylor approximation of a retarded functional.

Observe that the $S$ term in $\LW$ depends on the velocity. 
Historically this led to a lot of confusion
and so Helmholtz~\cite{Helmholtz:1847a} and
Maxwell~\cite{Maxwell:1855a} doubted for a long time
that Weber's force law complies with conservation of energy;
cf.~\cite[\S 3.6]{Koch-Torres-Assis:1994a}.
In~\cite[\S 384/385]{Thomson:1867a} Weber's theory was even classified
under the theories \textit{``pernicious rather than useful.''}
\\
Although in 1871 Weber explained in detail that his force law
satisfies the principle of preservation of energy, his
article~\cite{Weber:1871a} was not mentioned in the
translation~\cite{Thomson:1871a}
of Thompson and Tait's book by Helmholtz and Wertheim
which infuriated Z\"ollner~\cite[Vorrede]{Zollner:1872a}
\textit{``Ich wage es zuversichtlich zu behaupten, dass in der ganzen
deutschen Literatur nicht ein einziges Lehrbuch anzutreffen sein wird,
welches wie jener ber\"uchtigte \S 385 des Werkes von
{\sc Thomson} und {\sc Tait} auf dem engen Raume von nur dreissig
Zeilen eine solche F\"ulle von \emph{absolutem} Nonsens
enth\"alt.''} 
In the new edition of Thompson and
Tait~\cite{Thomson-Lord-Kelvin:1879a} the infamous $\S 385$
has disappeared.

A symplectic way to see that the disputed preservation of energy holds
for Weber's force law is to rearrange in~(\ref{eq:L_W}) the brackets to obtain
\begin{eqnarray*}
\LW(r,\phi,v_r,v_\phi)
&=&\frac{1}{2}\bigg(1+\frac{1}{c^2r}\bigg) v_r^2
    +\frac12  r^2 v_\phi^2+\frac{1}{r}
\\
&=&\frac{1}{2}\left( g_{rr}\, v_r^2
    +g_{\phi\phi}\, v_\phi^2\right)+\frac{1}{r}
\\
&=&
     T-V.
\end{eqnarray*}
The first term in the sum can be interpreted as kinetic energy with
respect to a non-flat Riemannian metric $g$ on $\R^2_\times$,
while the second term is minus the (velocity independent) Coulomb
potential. In Appendix~\ref{sec:proton-proton}
we explain how in the case of two protons the metric becomes
singular at Weber's critical radius $\rho$. Outside this critical radius the
metric is Riemannian, while inside it is Minkowski.

Legendre transformation $\Ll:T\R^2_\times\to T^*\R^2_\times$
of the mechanical Lagrangian $L=T-V$ yields the mechanical Hamiltonian
\begin{eqnarray*}
\HW(r,\phi,p_r,p_\phi)
     &=&T^*+V
\\
&=&\frac{1}{2}\left(\frac{1}{g_{rr}}\,
    p_r^2+\frac{1}{g_{\phi\phi}}\,p_\phi^2\right)-\frac{1}{r}\\
&=&\frac{1}{2}\left(\frac{c^2r}{c^2r+1}\,p_r^2
     +\frac{1}{r^2}\,p_\phi^2\right)-\frac{1}{r}.
\end{eqnarray*}
But this is an autonomous Hamiltonian and any such is preserved along its
flow; see e.g.~\cite[p.\,99]{McDuff:2017b}. This explains preservation of energy.

With the help of the fine structure constant $\alpha=\frac{1}{c}$ the
Hamiltonian reads
\begin{equation}\label{eq:HW}
     \HW(r,{\color{brown}\phi},p_r,p_\phi)
     =\frac{1}{2}\frac{r}{r+\alpha^2}\, p_r^2+\frac{1}{2r^2}\, p_\phi^2-\frac{1}{r}.
\end{equation}
We refer to $\HW$ as \textbf{Weber's Hamiltonian}.

%%%%%%%%%%%%%%%%%%%%%%%%%%%%%%%%%%%
%%%%%%% Chapter:  %%%%%%%%%%%%%%%
%%%%%%%%%%%%%%%%%%%%%%%%%%%%%%%%%%%
\section{Quantized tori}

Weber's Hamiltonian is completely integrable
in the planar case under consideration.
Indeed it is rotational invariant and therefore
angular momentum commutes with $\HW$.

On closed symplectic manifolds the \Arnold-Liouville
theorem, see e.g.~\cite[App.\,A.2]{Hofer:2011a},
tells us how the manifold gets foliated by invariant tori.
Even below the escape threshold, i.e. for negative energy, in the case
at hand the tori might have some holes, because of collisions.
\\
For the Kepler Hamiltonian collisions can be regularized.
There exist different regularizations in the literature.
For example, Moser showed in~\cite{Moser:1970a} that for
negative energies $E<0$ the Kepler flow after regularization
can be identified with the geodesic flow on a $2$-dimensional sphere.
For zero energy $E=0$ Kepler flow after regularization becomes
identified with geodesic flow on the euclidean plane,
for positive energy $E>0$ with the geodesic flow on hyperbolic space,
as shown by Belbruno~\cite{Belbruno:1977a} and Osipov~\cite{Osipov:1977a};
see also Milnor~\cite{Milnor:1983a}.
Historically even older is the regularization
by Goursat~\cite{Goursat:1889a} which was 
rediscovered independently by Levi-Civita~\cite{Levi-Civita:1920a}
and is nowadays referred to as Levi-Civita
regularization.
Different from Moser, the Levi-Civita regularization is $2:1$
and transforms the Kepler flow for negative energy $E<0$
to the flow of two uncoupled harmonic oscillators.

To our knowledge so far nobody studied regularization
for the Weber Hamiltonian and this would be an interesting project.
Nevertheless, one can easily see how the invariant tori look like.
Indeed for negative energy usual orbits, apart from circle orbits and
collisions, are given by \emph{rosettes}; see
Figure~\ref{fig:fig-rosette}.
\begin{figure}%[h]
  \centering
  \includegraphics%[width=0.9\textwidth]
                             [height=4cm]
                             {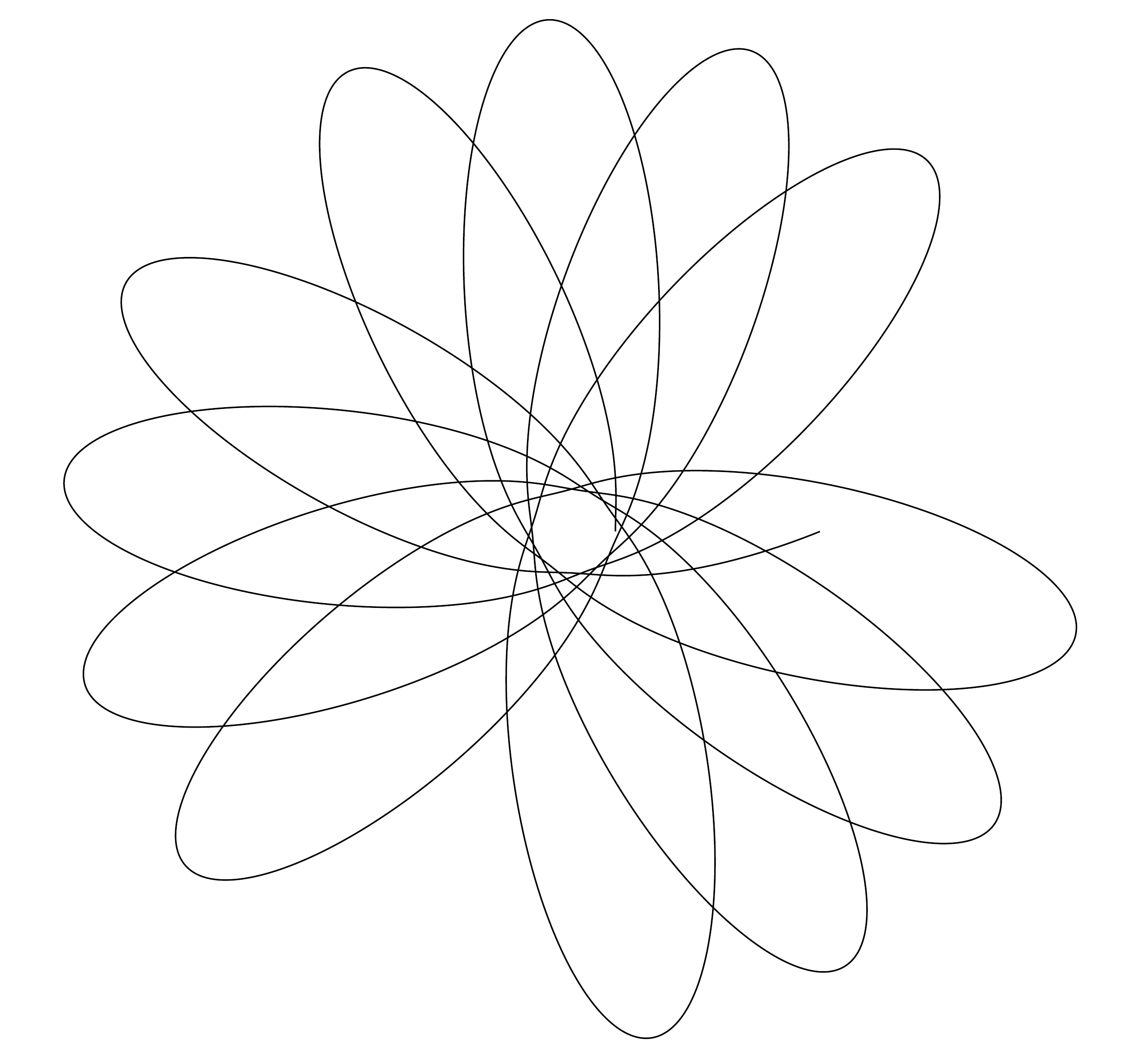}
  \caption{A rosette ($r=1+\kappa \cos \gamma\phi$, eccentricity
    $0<\kappa<1$)}
  \label{fig:fig-rosette}
\end{figure}
Different from the Kepler problem where these orbits are given by
Kepler ellipses, the rosettes don't need to be closed and they show a
perihel shift; see references
in~\cite[p.\,56 footnote\,37]{Wolfschmidt:2018a}.
In the case at hand,  where the central body
is interpreted as a proton, let us replace the expression
'perihel shift' by \textbf{periproton shift}.
If the periproton shift is a rational multiple of $2\pi$, then the
rosette finally closes and we obtain a periodic orbit.
Hence by rotation invariance of $\HW$
we obtain a circle family of periodic orbits
by rotating our closed rosette.
But a circle times a circle is a 2-dimensional torus.
In this case the flow on the invariant torus is rational.

If the periproton shift is irrational, the rosette is not closed and
we obtain the invariant torus by looking at the closure of the rosette.
In this case the flow on the invariant torus is irrational.

A conceptual explanation what Bohr-Sommerfeld quantization
of a completely integrable system is was given by
Einstein~\cite{Einstein:1917a} in 1917; see
also~\cite[\S 14.1]{Gutzwiller:1990a}.
The invariant tori are Lagrangian. Hence if we consider the
restriction of the Liouville 1-form to the torus $T$ it is closed
and therefore it defines a class $[\lambdacan]\in\Ho^1(T;\R)$ in the first
cohomology of the torus.
If this class is integer-valued, then we call the torus a
\textbf{quantized torus}.
For some unknown reason the electron just likes to stay on quantized
tori. Emission occurs if the electron jumps from one quantized torus
to another one.
In this case the frequency we observe is given by $\frac{1}{2\pi}$
times the energy difference of the two energy levels where the
quantized tori lie.

Therefore to understand the spectrum of the electron we have to figure
out which energy level contain a quantized torus.
How to do this in practice is the content of
Sommerfeld's book~\cite{Sommerfeld:1916a}.

In the next section we explain how one apply Sommerfeld's
calculations to the Weber Hamiltonian.

%\newpage
%%%%%%%%%%%%%%%%%%%%%%%%%%%%%%%%%%%
%%%%%%% Chapter:  %%%%%%%%%%%%%%%
%%%%%%%%%%%%%%%%%%%%%%%%%%%%%%%%%%%
\section{Bohr-Sommerfeld quantization}

We quantize the Hamiltonian according to the rules of Bohr and
Sommerfeld. Note that $\HW$ does not depend on ${\color{brown}\phi}$,
so $p_\phi$ is a preserved quantity that corresponds to
angular momentum. According to Bohr~\cite{Bohr:1913a} angular
momentum has to be quantized: The
\textbf{angular momentum quantum number} is
\begin{equation}\label{eq:quant-p_phi}
     \frac{1}{2\pi}\int_0^{2\pi} p_\phi\, d\phi =p_\phi=:\ell\in\N.
\end{equation}
Here the first identity holds by preservation of angular momentum.
Later Sommerfeld referred to $\ell$ as the azimuthal quantum number
which in his notation was called $n$.
Originally Bohr just considered circular orbits and therefore the
azimutal quantum number was enough.
In contrast, Sommerfeld~\cite{Sommerfeld:1916a} allowed as well more
general orbits and imposed a quantization condition on $p_r$ as well, namely
$$
     \frac{1}{2\pi}\int p_r\, dr =:n_r\in\N_0.
$$
This integral has to be interpreted as follows.
Note that because of rotational invariance (independence of $\phi$) of
the Weber Hamiltonian $\HW$, orbits in the configuration space are
given by rosettes, i.e. the $r$ variable is periodic in time,
oscillating between the periproton $r_{\rm min}$ and the apoproton
$r_{\rm max}$ (closest/farthest point from proton). 
Therefore 
$$
     \int p_r\, dr
     =2 \int_{r_{\rm min}}^{r_{\rm max}} p_r\, dr.
$$
How to interpret and calculate $\int_{r_{\rm min}}^{r_{\rm max}} p_r\, dr$
is illustrated by~\cite[p.\,478 Fig.\,101]{Sommerfeld:1921a}.
Using~(\ref{eq:HW}) and~(\ref{eq:quant-p_phi}) we compute for $p_r$
the formula
\begin{eqnarray*}
p_r&=&\sqrt{2\bigg(1+\frac{\alpha^2}{r}\bigg)\bigg(\HW +\frac{1}{r}-\frac{\ell^2}{2r^2}\bigg)}\\
&=&\sqrt{2\HW +\frac{2+2\alpha^2 \HW }{r}-\frac{\ell^2-2\alpha^2}{r^2}-\frac{\ell^2\alpha^2}{r^3}}.
\end{eqnarray*}
We abbreviate
\begin{equation}\label{eq:ABC}
     A=2\HW ,\quad B=1+\alpha^2 \HW , \quad C=-\ell^2+2\alpha^2<0, \quad
     D_1=-\ell^2 \alpha^2
\end{equation}
so that $p_r$ becomes
$$
  p_r=\sqrt{A+\frac{2B}{r}+\frac{C}{r^2}+\frac{D_1}{r^3}}.
$$
As calculated by Sommerfeld~\cite[p.480\,(16)]{Sommerfeld:1921a}
the integral $n_r$ is given by
$$
     n_r=-i\left(\sqrt{C}-\frac{B}{\sqrt{A}}
     -\frac{BD_1}{2C\sqrt{C}}\right)
$$
where as explained in~\cite[p.479]{Sommerfeld:1921a}
the square root of $C$ has to be taken \emph{negative} imaginary,
whereas the one of A \emph{positive}.
Plugging in~(\ref{eq:ABC}) we obtain
\begin{eqnarray*}
 n_r &=&-i\bigg(-i\sqrt{\ell^2-2\alpha^2}-\frac{1+\alpha^2 \HW }{\sqrt{2\HW }}+\frac{(1+\alpha^2 \HW )\ell^2\alpha^2}{2i(\ell^2-2\alpha^2)^{3/2}}\bigg)\\
&=&-\sqrt{\ell^2-2\alpha^2}+i\frac{1+\alpha^2 \HW }{\sqrt{2\HW }}-\frac{(1+\alpha^2 \HW )\ell^2\alpha^2}{2(\ell^2-2\alpha^2)^{3/2}}.
\end{eqnarray*}
Taking the two imaginary terms to the left hand side shows that
$$
     n_r +\sqrt{\ell^2-2\alpha^2}+\frac{(1+\alpha^2 \HW
     )\ell^2\alpha^2}{2(\ell^2-2\alpha^2)^{3/2}}=i\frac{1+\alpha^2 \HW
     }{\sqrt{2\HW }}.
$$
By Taylor expansion of the left hand side in $(\alpha^2)$ up to first order we get
\begin{equation}
\begin{split}
     n_r +\sqrt{\ell^2-2\alpha^2}+\frac{(1+\alpha^2 \HW
     )\ell^2\alpha^2}{2(\ell^2-2\alpha^2)^{3/2}}
   &\approx n_r +\ell-\frac{\alpha^2}{\ell}+\frac{\ell^2
     \alpha^2}{2\ell^3}\\
   &= (n_r+\ell)-\frac{\alpha^2}{2\ell}.
\end{split}
\end{equation}
Plugging this formula into the previous formula and taking squares
we obtain to first order in $(\alpha^2)$ the approximation
$$
     ( n_r +\ell)^2-\frac{\alpha^2(\ell+
       n_r)}{\ell}\approx-\frac{1+2\alpha^2 \HW }{2 \HW }
     =-\frac{1}{2\HW}-\alpha^2.
$$
Rearranging we get
$$
     \frac{-1}{2\HW }\approx( n_r +\ell)^2-\frac{\alpha^2  n_r }{\ell}
$$
and therefore, expanding again up to first order in $(\alpha^2)$, we obtain
\begin{eqnarray*}
\HW &\approx&-\frac{1}{2\big(( n_r +\ell)^2-\frac{\alpha^2  n_r
              }{\ell}\big)}\approx -\frac{1}{2( n_r
              +\ell)^2}-\frac{\alpha^2  n_r }{2( n_r +\ell)^4 \ell}.
\end{eqnarray*}
By introducing the \textbf{main quantum number}
$$
     n:=n_r+\ell\in\N
$$
the previous formula becomes
\begin{equation}\label{eq:main}
     \HW\approx
     -\frac{1}{2n^2}-\frac{\alpha^2}{2n^3\ell}+\frac{\alpha^2}{2n^4}.
\end{equation}
The corresponding formula for Sommerfeld's relativistic Hamiltonian is
$$
     H_{\rm S}\approx
     -\frac{1}{2n^2}-\frac{\alpha^2}{2n^3\ell}+\frac{3\alpha^2}{8n^4}.
$$

\appendix
%%%%%%%%%%%%%%%%%%%%%%%%%%%%%%%%%%%
%%%%%%% Appendix  %%%%%%%%%%%%%%%%%%%%
%%%%%%%%%%%%%%%%%%%%%%%%%%%%%%%%%%%
\section{Weber's Lagrangian and delayed potentials}
\label{sec:delayed-potentials}

In several works Neumann treated the connection between Weber's dynamics
and delayed potentials, see~\cite{Neumann:1868a,Neumann:1868b}
and~\cite[Ch.\,8]{Neumann:1896a}.
Neumann explained how Weber's potential function is related
to Hamilton's principle which in~\cite{Neumann:1868b} he called
\textit{``norma suprema et sacrosancta, nullis exceptionibus obvia''}.

Strictly speaking, a delay potential only makes sense
for loops and not for chords. Hence we abbreviate by
$\mathcal{L}=C^\infty\big(S^1, \mathbb{R}^2_\times\big)$
the free loop space on the punctured plane
$\R^2_\times:=\mathbb{R}^2 \setminus \{0\}$.
For a potential 
$V \in C^\infty\big(\mathbb{R}^2_\times,\mathbb{R}\big)$
and a constant  $c _{\rm W}>0$ we define three functions
$$
     \Ss_{\mathrm{kin}}, \Ss_{\mathrm{pot}}, \Ss \colon \mathcal{L} \to
     \mathbb{R}
$$
by $\Ss=\Ss_{\mathrm{kin}}-\Ss_{\mathrm{pot}}$ and by
$$
     \Ss_{\mathrm{kin}}(q):=\frac{1}{2}\int_0^1|q'(t)|^2 dt,\qquad
     \Ss_{\mathrm{pot}}(q):=\int_0^1 V\Big(q\Big(t-\tfrac{|q(t)|}{c
       _{\rm W}}\Big)\Big)dt.
$$
Physically this means that the potential energy is evaluated
at a retarded time.
Namely, the position of the proton at the origin has to be
transmitted to the electron at speed $c_{\rm W}$.
It is a strange fact that to obtain Weber's force
this transmission velocity is given by the Weber constant $c_{\rm W}$
which, as measured by Weber and Kohlrausch~\cite{Weber:1857a},
equals $\sqrt{2} c$ where $c$ is the speed of light.

We assume that $V$ only depends on the radial coordinate
$V(q)=V(|q|)=V(r)$. Setting $r(t):=\abs{q(t)}$ the functional
$\Ss_{\mathrm{pot}}$ becomes a function of $r=r(t)$, still denoted by
$$
     \Ss_{\mathrm{pot}}(r)=\int_0^1 V\Big(r\Big(t-\tfrac{r(t)}{c_{\rm
     W}}\Big)\Big)dt.
$$
We further abbreviate
$$
     a=\frac{1}{c_{\rm W}}=\frac{\alpha}{\sqrt{2}},\qquad
     c_{\rm W}=\sqrt{2} c,
$$
where $\alpha$ is the fine structure constant.

Setting
$V_r( a,t):=V\big(r\big(t- a r(t)\big)\big)$
we obtain for the partial derivatives in $a$ the formulas
$$\frac{\partial}{\partial a}V_r( a,t)=-V'\big(r\big(t- a
r(t)\big)\big)r'\big(t- a r(t)\big)r(t)$$
and
\begin{equation*}
\begin{split}
     \frac{\partial^2}{\partial a^2}V_r( a,t)
   &=V''\big(r\big(t- a r(t)\big)\big)r'\big(t- a r(t)\big)^2r(t)^2
\\
   &\quad+V'\big(r\big(t- a r(t)\big)\big)r''\big(t- a r(t)\big)r(t)^2.
\end{split}
\end{equation*}
In particular, at $a=0$ we get
$$\frac{\partial}{\partial a}V_r(0,t)=-V'(r(t))r'(t)r(t)$$
and
$$\frac{\partial^2}{\partial a^2}V_r(0,t)=V''(r(t))r'(t)^2r(t)^2+
V'(r(t))r''(t)r(t)^2.$$
We define
$$
     \Ss^k_{\mathrm{pot}}(r):=\frac{1}{k!}\int_0^1
\frac{\partial^k}{\partial  a^k} V_r(0,t)\, dt,\qquad k\in\N_0.
$$
For $k=0$ this is the unretarded action functional
$$
     \Ss^0_{\mathrm{pot}}(r)=\int_0^1 V(r(t))\, dt.
$$
To see that $\Ss^1_{\mathrm{pot}}\equiv 0$ vanishes identically
choose a primitive $F$ of the function in one variable $r\mapsto V'(r)
r$, that is $F'(r)=V'(r) r$. Indeed, we get that
\begin{equation*}
\begin{split}
\Ss^1_{\mathrm{pot}}(r)&=-\int_0^1 V'(r(t))r'(t)r(t)\, dt\\
&=-\int_0^1 F'(r(t))r'(t)\, dt\\
&=-\int_0^1
\frac{d}{dt}F(r(t))\, dt\\
&=0.
\end{split}
\end{equation*}
The last equation follows, because $r(t)=\abs{q(t)}$ is periodic.
Using integration by parts for the second term in the sum we get that
\begin{eqnarray*}
\Ss^2_{\mathrm{pot}}(r)&=&\frac{1}{2}\int_0^1 \Big(V''(r(t))r'(t)^2r(t)^2+
\underbrace{r''(t)}_{u'}\underbrace{V'(r(t))r(t)^2}_{v}\Big)dt\\
&=&\frac{1}{2}\int_0^1 V''(r(t))r'(t)^2r(t)^2\,dt\\
& &-\frac{1}{2}\int_0^1
\underbrace{r'(t)}_{u}\underbrace{
\left(V''(r(t))r'(t)r(t)^2+2V'(r(t))r'(t) r(t)\right)}_{v'}dt\\
&=&-\int_0^1 V'(r(t))r'(t)^2 r(t) \, dt.
\end{eqnarray*}
For the Coulomb potential
$$
     V(r)=-\frac{1}{r},\qquad V'(r)=\frac{1}{r^2},
$$
this simplifies to
$$
     \Ss^2_{\mathrm{pot}}(r)=-\int_0^1 \frac{r'(t)^2}{r(t)}\,dt.
$$
Hence Taylor approximation of $\Ss_{\mathrm{pot}}$ to second order
in $a=\frac{1}{c_{\rm W}}$ leads to
\begin{equation*}
\begin{split}
     \Ss^0_{\mathrm{pot}}(r)+\frac{1}{c _{\rm
     W}}\Ss^1_{\mathrm{pot}}(r)+\frac{1}{c_{\rm
     W}^2}\Ss^2_{\mathrm{pot}}(r)
   &=-\int_0^1 \frac{1}{r(t)}\bigg(1+\frac{r'(t)^2}{c _{\rm
         W}^2}\bigg) dt\\
   &=\int_0^1 S(r(t))\,dt
\end{split}
\end{equation*}
where
$$
      S(r)=-\frac{1}{r}\bigg(1+\frac{r'^2}{2c^2}\bigg)
$$
is Neumann's potential function,
see~(\ref{eq:L_W}), and $2c^2=c _{\rm W}^2$.

%%%%%%%%%%%%%%%%%%%%%%%%%%%%%%%%%%%
%%%%%%% Appendix  %%%%%%%%%%%%%%%%%%%%
%%%%%%%%%%%%%%%%%%%%%%%%%%%%%%%%%%%
\section{Proton-proton system - Minkowski metric}
\label{sec:proton-proton}

For a positive charge influenced by the proton
the Weber force exhibits fascinating properties as well.
For simplicity suppose both charges are protons
and set their mass equal to one.
In this case the Lagrangian function is given by
\begin{equation*}\label{eq:L_W-pp}
     \LW(r,\phi,v_r,v_\phi)
     =\frac{1}{2}( v_r^2+r^2 v_\phi^2)
     {\color{magenta}-}\frac{1}{r}\bigg(1+\frac{
         v_r^2}{2c^2}\bigg).
\end{equation*}
Changing brackets
\begin{equation*}
\begin{split}
\LW(r,\phi,v_r,v_\phi)
&=\frac{1}{2}\bigg(1 {\color{magenta}-}\frac{1}{c^2r}\bigg) v_r^2
    +\frac12  r^2 v_\phi^2 {\color{magenta}-}\frac{1}{r}
\\
&=\frac{1}{2}\left( g_{rr}\, v_r^2
    +g_{\phi\phi}\, v_\phi^2\right)-\frac{1}{r}.
\end{split}
\end{equation*}
Now the metric gets singular at \textbf{Weber's critical radius}
$$
     \rho:=\frac{1}{c^2}=\alpha^2.
$$
Outside Weber's critical radius the metric is Riemannian,
while inside it is Minkowski. An interesting aspect of Weber's
critical radius is that while outside Weber's critical radius
the force is repulsing -- inside it is \emph{attracting}!
This led Weber to predict -- 40 years before Rutherford's
experiments -- an atom consisting of a nucleus build
of particles of the same charge together with particles
of the opposite charge moving around the nucleus
like planets.
For more informations about Weber's planetary model of the atom
see~\cite{Wolfschmidt:2011a,Wolfschmidt:2018a}.

%%%%%%%%%%%%%%%%%%%%%%%%%
%%%%%%%%% REFERENCES %%%%%%
%%%%%%%%%%%%%%%%%%%%%%%%
%\renewcommand{\bibname}{References}
%\bibliographystyle{plain}
         %   erzeugt:     [1] Joa Weber
%\bibliographystyle{abbrv}
         %  erzeugt:      [1] J. Weber and 
\bibliographystyle{alpha}
         %  article:    [Web05]  J. Weber
         %  book:      [Web05]  Joa Weber
         % more authors: [HZ87]
%%%%%%%%%%%%%%%%%%%%%%%%%
%% include Bibliography in TOC %%
% en.wikibooks.org/wiki/LaTeX/Bibliography_Management#Using_tocbibind
%%%%%%%%%%%%%%%%%%%%%%%%%
% Using hyperref, one should say:
%\cleardoublepage
%\phantomsection
\addcontentsline{toc}{section}{References}
\bibliography{$HOME/Dropbox/0-Libraries+app-data/Bibdesk-BibFiles/library_math}{}
%$
%%%%%%%%%%%%%%%%%%%%%%%%%
%%%%%%%%% standard %%%%%%%%%
%%%%%%%%%%%%%%%%%%%%%%%%%
%\begin{thebibliography}{00000}
%\small
%\end{thebibliography}

%%%%%%%%%%%%%%%%%%%%%%%%%%%%%%%%%%%%
%%%%%%%%%%%%% GLOSSARY %%%%%%%%%%%%%%%
%%%%%%%%%%%%%%%%%%%%%%%%%%%%%%%%%%%%
% Using hyperref, one should say:
%\cleardoublepage
%\phantomsection
%\printnomenclature
%
%This is $F$\label{nomen:F} 
%\nomenclature[EF]{$F$}{Objective function}{}{\pageref{nomen:F}}
%\clearpage

%%%%%%%%%%%%%%%%%%%%%%%%%
%%%%%%%%% INDEX %%%%%%%%%%
%%%%%%%%%%%%%%%%%%%%%%%%%
% Using hyperref, one should say:
%\cleardoublepage
%\phantomsection
%\addcontentsline{toc}{chapter}{Index}
%\printindex

\end{document}